\newtheorem{theorem}{Theorem}
\newtheorem{lemma}{Lemma}
\newtheorem{obs}{Observation}
\title{Maximum Area Rectangle Separating Red and Blue Points\thanks{This research was partially supported by NSF award IIP1439718 and CPRIT award RP150164}}
\author{
Bogdan Armaselu\thanks{Department of Computer Science, University of Texas at Dallas, {\tt bogdan.armaselu@utdallas.edu}}
\and
Ovidiu Daescu\thanks{Department of Computer Science, University of Texas at Dallas, {\tt  daescu@utdallas.edu}}
}
\date{}
\begin{document}
\thispagestyle{empty}
\maketitle
%\vspace{-0.2in}

\begin{abstract}
Given a set $R$ of $n$ red points and a set $B$ of $m$ blue points, we study the problem of finding a rectangle that contains all the red points, the minimum number of blue points and has the largest area. 
We call such rectangle a \textit{maximum separating rectangle}. 
We address the planar, axis-aligned (2D) version, and present an $O(m \log m + n)$ time, $O(m+n)$ space algorithm. The running time reduces to $O(m + n)$ if the points are pre-sorted by one of the coordinates. 
We further prove that our algorithm is optimal in the decision model of computation. 
\end{abstract}

\section{Introduction}

Consider two sets of points, $R$ and $B$ in the plane. $R$ contains $n$ points, called \textit{red points}, and $B$ contains $m$ points, called \textit{blue points}. 
The problem we study in this paper, called the \textit{Maximum Area Rectangle Separating Red and Blue Points} problem, is to find an axis-aligned rectangle that contains all the red points, the minimum number of blue points, and has the largest area. 
We call such a rectangle a \textit{maximum separating rectangle}. 
An example of a maximum separating rectangle is illustrated in Figures \ref{fig:example_ap}, for the planar axis-aligned case, as well as the planar non-axis-aligned case.

\begin{figure}[t]
\centering
\includegraphics[scale=0.5]{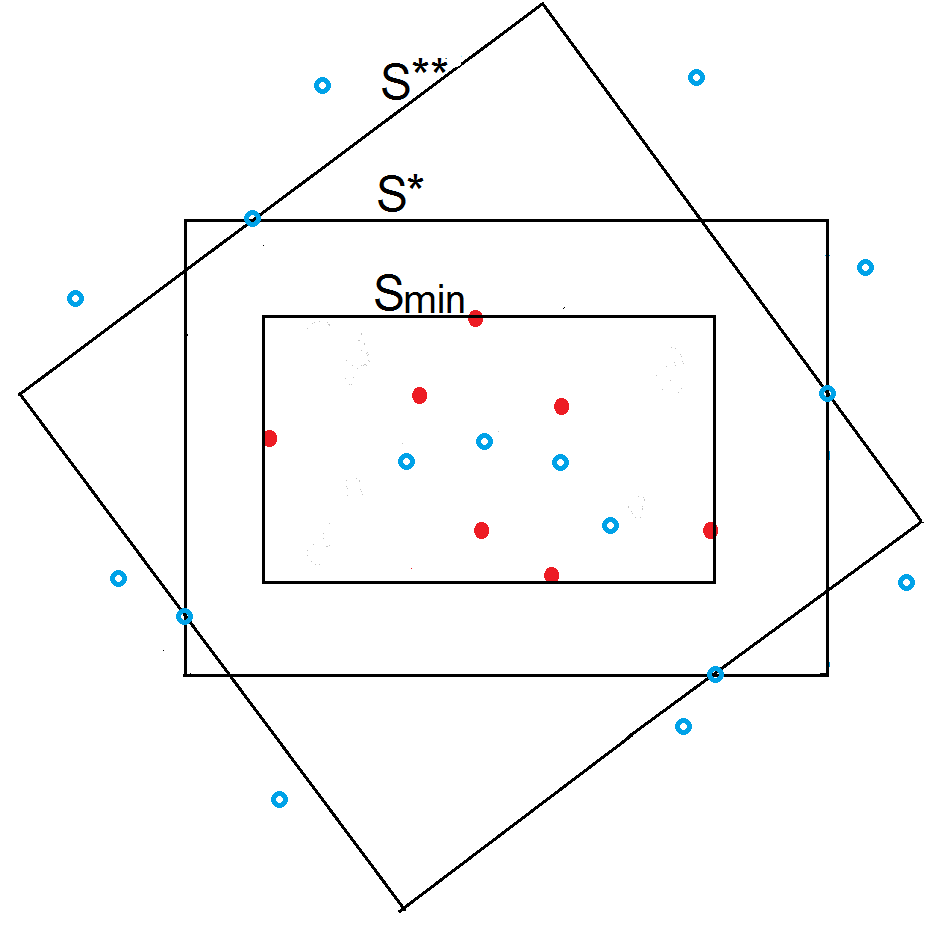}
\caption{Red points in $R$ are shown in solid circles and blue points in $B$ are shown in empty circles. 
The minimum enclosing rectangle $S_{min}$ of all red points, the maximum axis-aligned separating rectangle $S^*$, and the maximum non axis-aligned separating rectangle $S^{**}$, are shown.  }
\label{fig:example_ap}
\end{figure}

Applications that require separation of bi-color points could benefit from efficient results to this problem. 
For instance, consider we are given a tissue containing a tumor, where the tumor cells are specified by their coordinates. 
The coordinates of healthy cells are also given. The goal is to separate the tumor cells from the healthy cells, for surgical removal or radiation treatment. 
Another application would be in city planning, where blue points represent buildings and red points represent monuments and the goal is to build a park that contains the monuments and as few buildings as possible. 

\subsection{Related Work}
\label{Related Work} 

The problem of finding the largest-area empty axis-aligned rectangle among a set of $n$ points was introduced by Hsu et. al~\cite{Hsu}. 
They consider the axis-aligned version and show how to find all optimal solutions in $O(n^2)$ worst-case and $O(n \log^2 n)$ expected time. 
Later, Chazelle et. al showed how to find one optimal solution in $O(n \log^3 n)$ in the worst case \cite{Chazelle}. 
Currently, the best known result for this problem is by Aggarwal and Suri~\cite{Aggarwal}, namely $O(n \log^2 n)$ time nd $O(n)$ space to find an optimal solution.
To do that, they give a divide-an-conquer approach in which a largest empty corner rectangle problem is solved in the merging step.
They also show how to find the largest-perimeter empty rectangle optimally in $O(n \log n)$ time, by modifying their approach for finding the largest-area empty rectangle.

Mukhopadhyay et. al \cite{Mukhopadhyay} studied the problem of finding the maximum empty arbitrary oriented rectangle among a planar point set bounded by a given axis-aligned rectangle. 
They give an $O(n^3)$-time, $O(n^2)$-space algorithm to find all such maximum empty rectangles. 
Chaudhuri et. al \cite{Chaudhuri} independently gave a different solution for the same problem, also with $O(n^3)$ time and $O(n^2)$ space. 
For the 3D axis-aligned case, Nandy et. al \cite{Nandy} give an algorithm to compute the maximum empty box that runs in $O(n^3)$ time using $O(n)$ space.
%For the 3D axis aligned case, Datta et. al \cite{Datta} give an algorithm to compute the maximum empty box among $n$ points in the space that runs in $O(n^3)$ time using $O(n^2 \log n)$ space.

More recetntly, Dumitrescu and Jiang \cite{Dumitrescu} considered the problem of computing the maximum-volume axis-aligned $d$-dimensional box that is empty with respect to a given point set $P$ in $d$ dimensions. 
The target box is restricted to be contained within a given axis-aligned box $R$. 
For this problem, they give the first known FPTAS, which computes a box of a volume at least $(1 - \epsilon) OPT$, for an arbitrary $\epsilon$, where OPT is the volume of the optimal box. 
Their algorithm runs in $O((22 d \epsilon^{-2})^d \cdot n \log^d n)$ time.

Kaplan and Sharir study the problem of finding a maximal empty axis-aligned rectangle containing a query point \cite{Sharir}. 
They design an algorithm to answer queries in $O(\log^4 n)$ time, with $O(n \alpha(n) \log^4 n)$ time and $O(n \alpha(n) \log^3 n)$ space for preprocessing. 
Here $\alpha(n)$ is the inverse of the Ackermann's function. 
%It is worth noting that, when an "origin" point $O$ and staircases of points around $O$ are given, %they can find the largest rectangle containing only $O$ in $O(n \alpha(n))$ time \cite{Sharir, %Mozes}.
In a different paper, they also solve the disk version of the problem, tat asks to find the maximal empty disk containing a query point) \cite{Kaplan}. 
Their approach takes $O(\log^2 n)$ query and $O(n \log^2 n)$ preprocessing time, using a $O(n \log n)$ space data structure.

For the case where the input consists of axis-aligned rectangles rather than points, Nandy et. al \cite{Bhattacharya} presented a solution that finds the maximum empty rectangle in $O(n^2)$ worst-case, $O(n \log n)$ expected time.
In a follow-up paper \cite{Sinha}, they consider the problem of finding the largest empty rectangle among line segments and present an $O(n \log^2 n)$-time algorithm,
which they also extend to the case when the input consists of arbitrary polygonal obstacles.

Separability of two point sets using various separators is a well known problem. 
Megiddo et. al \cite{Megiddo} study the hyperplane separability of point sets in ${\mathbb R}^d$ in $d \geq 2$ dimensions. 
They show how to decide one-hyperplane separability using linear programming in polynomial time. 
They also prove that the problem of separating two point sets by $k$ lines is NP-complete. 
Aronov et. al \cite{Aronov} consider four metrics to evaluate misclassification of points of two sets by a linear separator. 
One of them is the number of misclassified points (which is somewhat related to our problem). 
For this error metric, they present an $O(n^d)$-time algorithm. 

Separating point sets with circles has also been considered. 
The problem of finding the minimum area separating circle among red and blue points was studied by Kosaraju et. al \cite{Kosaraju}. 
They solve the decision problem in linear time and give an $O(n \log n)$ time algorithm for computing the minimum separating circle, if it exists. 
When point sets cannot be separated by a circle, Bitner, Cheung and Daescu~\cite{Bitner} give two algorithms that run in $O(m n \log m + n \log n)$ and $O(m n + m^{1.5} \log^{O(1)} m)$ time, respectively.
Armaselu and Daescu~\cite{Armaselu} studied the dynamic version of this problem and presented results for the case when blue points are inserted or deleted. 

\subsection{Our Results}
\label{Our Results}

We consider the planar case of the problem and present an $O(m \log m + n)$ time, $O(m + n)$ space algorithm to find one optimal axis-aligned rectangle, based on a staircase approach. 
The running time reduces to $O(m + n)$ if the points are pre-sorted by one of the coordinates. 
We also prove a matching lower bound for the problem, by reducing from a known "$\Omega(n \log n)$-hard" problem. 

\section{Algorithms for the 2D axis-aligned version}
\label{2D-AA}

We begin by describing some properties of the bounded optimal solution.

\begin{obs}
\label{obs:2.1}
The maximum axis-aligned separating rectangle $S$ must contain at least one blue point on each of its sides. 
\end{obs}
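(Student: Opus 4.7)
The plan is to prove Observation~\ref{obs:2.1} by contradiction. Suppose $S$ is a maximum separating rectangle and some side of $S$ has no blue point on it; by symmetry it suffices to handle the top edge $e$. I will show that in this case we can push $e$ strictly upward to obtain a valid separating rectangle of larger area containing the same set of blue points, which contradicts the optimality of $S$.

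First, I would consider the (open) horizontal strip above $e$ bounded by the left and right sides of $S$, and look at the blue points it contains. If this strip contains no blue point at all, then $e$ may be pushed arbitrarily far upward, yielding rectangles that still contain all of $R$ and precisely the blue points of $S$; this gives a sequence of valid separating rectangles with unbounded area, contradicting $S$ being a maximum. Otherwise, let $b^*$ be the lowest blue point strictly above $e$ inside this vertical strip, and let $d>0$ be the vertical distance from $e$ to $b^*$; positivity of $d$ uses the assumption that $e$ itself contains no blue point. Pushing $e$ upward by $d/2$ yields a new axis-aligned rectangle $S' \supset S$.

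The key verifications are then: $S'$ still contains every red point (because $R \subset S \subset S'$); the blue points in $S'$ are exactly those in $S$ (because by choice of $b^*$, the half-open horizontal slab of width $d/2$ added to $S$ contains no blue point); and the area of $S'$ is strictly larger than that of $S$. Hence $S'$ is a separating rectangle with the same (minimum) blue count and strictly greater area, contradicting the maximality of $S$. Applying the same argument to the bottom, left, and right sides completes the proof.

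I do not expect any serious obstacle here; the only delicate points are being careful that the enlargement does not inadvertently increase the blue count (handled by choosing a shift strictly smaller than the gap $d$) and handling the case where no blue point limits the expansion, which forces the supposed maximum to have unbounded area.
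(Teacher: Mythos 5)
Your proof is correct and is exactly the standard ``slide the unsupported side outward'' argument that the paper itself relies on implicitly (it states the observation without proof, and the same sliding idea reappears in its construction of $S_{max}$ from $S_{min}$, including the unbounded case). Your handling of the two subcases --- no blue point above the edge forcing unboundedness, versus a positive gap $d$ allowing a strict enlargement by $d/2$ --- is sound and complete.
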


Consider a quad $Q$ of 4 non-colinear blue points $q_1, q_2, q_3, q_4$ in this order. Two vertical lines going through two of these points and two horizontal lines going through the other two of those points define a rectangle $S$. 
We say that $Q$ \textit{defines} $S$. 

\bigskip

\noindent\textbf{Definition 2.1}. Consider a vertical strip formed by the two vertical lines bounding $R$ to the left and right, as well as a horizontal strip formed by the parallel lines bounding $R$ above and below. 
The \textit{minimum $R$-enclosing rectangle} $S_{min}$ is the intersection between the vertical and horizontal strips (refer to Figure \ref{fig:strip_ap}).

\bigskip

\noindent\textbf{Definition 2.2}. A \textit{candidate rectangle} is an $R$-enclosing rectangle that contains the minimum number of blue points and cannot be extended in any direction without introducing a blue point.

\bigskip

We start with the minimum $R$-enclosing rectangle $S_{min}$. For each side of $S_{min}$, we slide it outwards parallel to itself until it hits a blue point (if no such point exists, then the solution is unbounded). 
Denote by $S_{max}$ the resulting rectangle (shown in Figure \ref{fig:example_ap}). Unbounded solutions can be easily determined in linear time, so from now on we assume bounded solutions.
If the interior of $S_{max} \setminus S_{min}$ does not contain blue points, then $S_{max}$ is the optimal solution, and we are done. 
We discard the blue points contained in $S_{min}$, as well as the blue points outside of $S_{max}$, from $B$ and call the resulting set $B$.

$B$ is partitioned into 4 disjoint subsets (quadrants) $B_{NE}, B_{NW}, B_{SW}, B_{SE}$.
Each quadrant contains points that are located in a rectangle formed by right upper (resp. left upper, left lower, right lower) corners of $S_{min}$ and $S_{max}$ (see Figure \ref{fig:strip_ap} for details).

\begin{figure}[htp]
\centering
\includegraphics[scale=0.75]{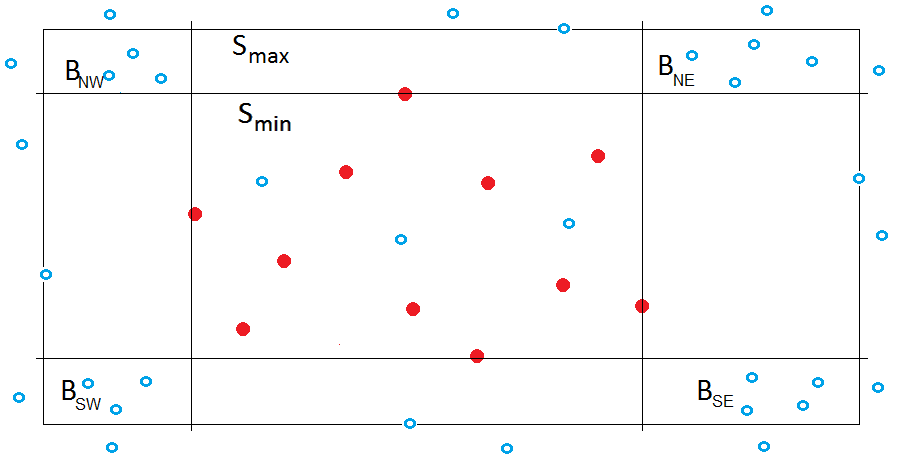}
\caption{The minimum enclosing rectangle $S_{min}$, the rectangle $S_{max}$ which bounds the solution space, and the subsets $B_{NE}, B_{NW}, B_{SW}, B_{SE}$. }
\label{fig:strip_ap}
\end{figure}

Consider the points of $B_{NE}$ and sort them by X coordinate. A point $p_0 = (x_0, y_0) \in B_{NE}$ is a candidate to be a part of an optimal solution only if there are no points $p = (x, y) \in B_{NE}$ with $x < x_0$ and $y < y_0$. 
We only leave such possible candidate blue points in $B_{NE}$ and discard the rest. 
The elements of $B_{NE}$ form a \textit{staircase} sequence, denoted $ST_{NE}$, which is ordered non-decreasingly by X coordinate and non-increasingly by Y coordinate, as shown in Figure \ref{fig:staircase}. 
The sets $B_{NW}, B_{SW}, B_{SE}$ are treated appropriately in a similar way and form the staircases $ST_{NW}, ST_{SW}, ST_{SE}$, which are ordered non-decreasingly by X coordinate. 
The 4 staircases can be found in $O(m \log m)$ time~\cite{Mukhopadhyay} and do not change in the axis-aligned cases.
Thus, abusing notation, we will refer to $ST_q$ as simply $B_q$ for each quadrant $q$.

While the staircase construction approach has been used before~\cite{Mukhopadhyay}, there are differences between how we use it in this paper and how it was used previously. 
Specifically, note that a maximum $B$-empty rectangle for a given orientation, computed as in \cite{Mukhopadhyay}, may not contain all red points.

\begin{figure}[htp]
\centering
\includegraphics[scale=0.75]{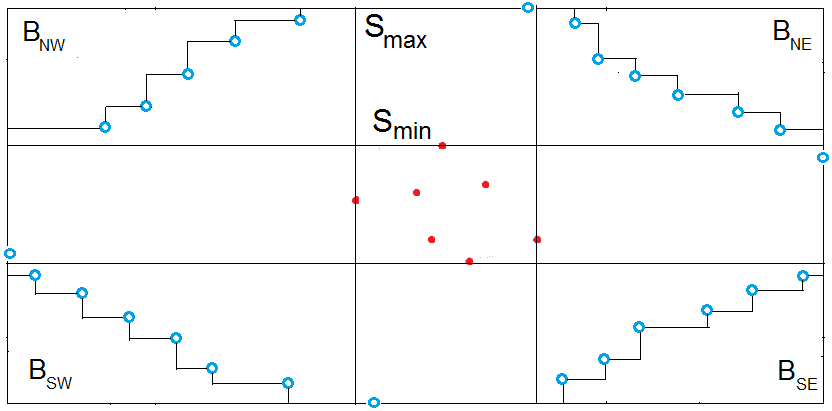}
\caption{The set of possible candidate points defining a solution. The sets $B_{NE}, B_{NW}, B_{SW}, B_{SE}$ are ordered by X, then by Y, and form a staircase}
\label{fig:staircase}
\end{figure}

\subsection{Finding all optimal solutions}
\label{Finding all optimal solutions}

We first prove that we can have $\Omega(m)$ maximum separating rectangles in the worst case. 
To do that, we use a construction similar to the one in~\cite{Jiang}, except that we have to ensure that all rectangles will contain $S_{min}$.

\begin{figure}[htp]
\centering
\includegraphics[scale=0.75]{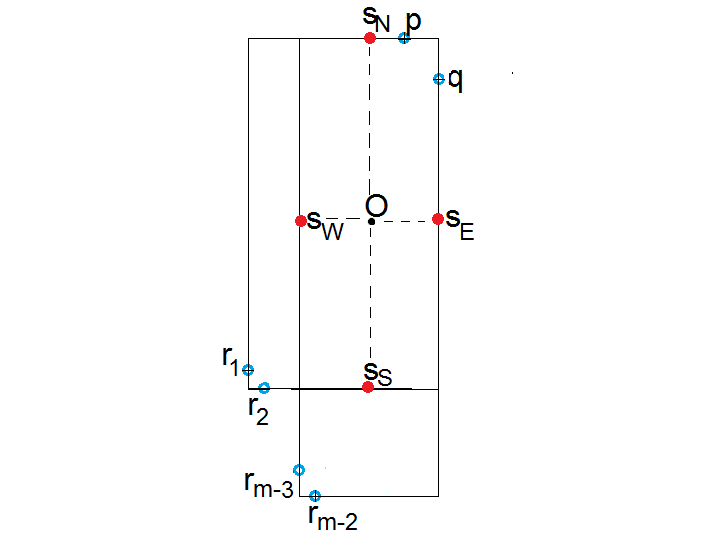}
\caption{There are 4 red points directly above, below, to the right, and to the left of the origin $O$.
The blue points are $p, q$, in $B_{NE}$, and $m-2$ other points in $B_{SW}$. 
All rectangles enclose $R$ and have the same area of $x_0 y_0$, thus giving $\Omega(m)$ maximum rectangles.}
\label{fig:lower_bound_aa}
\end{figure}

\begin{theorem}
\label{theorem:2.1}
In the worst case, there are $\Omega(m)$ maximum separating rectangles.
\end{theorem}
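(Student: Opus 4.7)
The plan is to exhibit an explicit point configuration, as depicted in Figure~\ref{fig:lower_bound_aa}, that admits $\Omega(m)$ distinct maximum separating rectangles. I would place a small cross of red points at $(\pm r, 0)$ and $(0, \pm r)$ for a tiny $r$, putting any remaining red points inside, so that $S_{min}$ is essentially a point. The two blue points $p, q \in B_{NE}$ are placed to pin the NE corner of every bounded candidate at a single pivot $(x_0, y_0)$: taking $p = (x_p, y_0)$ with $0 < x_p < x_0$ and $q = (x_0, y_q)$ with $0 < y_q < y_0$, a short case analysis via Observation~\ref{obs:2.1} shows that any bounded zero-blue rectangle has NE corner exactly at $(x_0, y_0)$, with $p$ on the top and $q$ on the right. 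The remaining $m-2$ blue points form a staircase $s_1, \ldots, s_{m-2}$ in $B_{SW}$ with $s_i = (-a_i, -b_i)$, chosen so that $(x_0 + a_{i+1})(y_0 + b_i) = C$ holds for every consecutive pair, where $C = x_0 y_0$; a concrete choice is $a_i = i$ and $b_i = C/(x_0 + i + 1) - y_0$, with $C$ large enough to keep all $b_i$ positive.

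For each $i \in \{1, \ldots, m-3\}$ I would exhibit the rectangle $T_i$ with SW corner $(-a_{i+1}, -b_i)$ and NE corner $(x_0, y_0)$. Verifying that $T_i$ is a candidate rectangle of area $C$ is routine: $T_i$ encloses $R$; no staircase point lies in its interior, since any offender would have to be an $s_k$ with $i < k < i+1$, and there are none; $p$ and $q$ sit on the top and right sides by construction; and $s_{i+1}$ sits on the left side while $s_i$ sits on the bottom, so Observation~\ref{obs:2.1} is met. A direct computation then shows that $T_i$ has area exactly $C$, independent of $i$.

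The subtle step will be ruling out any bounded zero-blue rectangle of area strictly greater than $C$. By Observation~\ref{obs:2.1}, every bounded candidate has left at some $-a_j$ and bottom at some $-b_i$, with $p, q$ on the top and right. A short staircase argument eliminates all but two patterns of index pairs: $j > i+1$ forces some $s_k$ with $i < k < j$ into the interior, while $j < i$ leaves the left side without a touching blue and permits an unbounded leftward extension. This leaves $j = i$ (``self'' rectangles of area $(x_0 + a_i)(y_0 + b_i) < C$, which are not candidates in the sense of Definition~2.2 because their left sides can be slid out to $-a_{i+1}$ without introducing any blue, collapsing each into $T_i$) and $j = i+1$ (the $T_i$ themselves). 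Hence the $T_i$ are precisely the bounded maximum separating rectangles, yielding $m - 3 = \Omega(m)$ of them, which proves the theorem.
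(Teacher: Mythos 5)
Your construction follows the same blueprint as the paper's proof (two blue points in $B_{NE}$ pinning the top-right corner, and a staircase of $m-2$ blue points in $B_{SW}$ placed on a hyperbola so that consecutive pairs support equal-area rectangles), and your verification that the $T_i$ are blue-empty, pairwise distinct, and all of area $C$ is fine. The genuine gap is in the placement of the red points. With $R$ a tiny cross at $(\pm r,0)$ and $(0,\pm r)$, the instance has an \emph{unbounded} solution and therefore no maximum separating rectangle at all: for $r$ small enough, the rectangle $[-r,r]\times[-r,Y]$ contains all red points and no blue point (both $p$ and $q$ have $x$-coordinate at least $x_p>r$, and every $s_i$ has $x$-coordinate $-a_i\le -1<-r$), and its area grows without bound as $Y\to\infty$; the same happens in all four directions, i.e., none of the four strips swept when sliding a side of $S_{min}$ outward contains a blue point. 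Restricting attention to ``bounded zero-blue rectangles'' does not rescue the argument, because the optimum is taken over all separating rectangles and the unbounded ones dominate; your own case analysis even flags the symptom (``permits an unbounded leftward extension'') without noticing that the entire instance is unbounded. (A secondary slip: with $C=x_0y_0$ and $a_i=i$ one gets $b_1=-2y_0/(x_0+2)<0$, so the stated concrete choice does not keep the $b_i$ positive; just take $C$ large and drop the requirement $C=x_0y_0$.)

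The missing ingredient is exactly the role the four special red points play in the paper's construction: they are pushed out to the coordinates of the blue supports so that every $R$-enclosing rectangle is \emph{forced} to reach them. In your notation, take $s_E=(x_0,0)$ and $s_N=(0,y_0)$; then every enclosing rectangle must extend to $x=x_0$ and $y=y_0$, and extending strictly beyond either value puts $q$ or $p$ in the interior, so the NE corner really is pinned at $(x_0,y_0)$. Likewise take $s_W=(-a_2,0)$ and $s_S=(0,-b_{m-3})$, which forces the rectangle far enough left and down to be blocked by the staircase while still leaving all of $T_1,\dots,T_{m-3}$ feasible. With that correction the solution space is bounded and your argument goes through, becoming essentially the paper's proof.
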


\begin{proof}
assume all blue points are in $B_{NE} \cup B_{SW}$ and refer to Figure \ref{fig:lower_bound_aa}. 
Let $B_{NE}$ be composed of two points $p, q$ and $B_{SW}$ of a sequence of points $r_1, \dots, r_{m-2}$, with the following coordinates.

1. $p(\frac{x_0}{4}, \frac{y_0}{2})$, with $x_0 > 0, y_0 > 0$;

2. $q(\frac{x_0}{2}, \frac{y_0}{4})$;

3. $r_i(x_i, y_i), \forall i = 1, \dots, m-2$;

4. $-\frac{3 x_0}{2} < x_i < x_j < 0, \forall i, j: 1 \leq i < j \leq m-2$;

5. $y_i = \frac{y_0}{2} - \frac{x_0 y_0}{\frac{x_0}{2} - x_{i-1}}, \forall i > 1$.

In addition, let $R$ contain 4 red points $s_E(\frac{x_0}{2}, 0), s_N(0, \frac{y_0}{2}), s_W(x_{m-3}, 0), s_S(0, y_2)$.

It is easy to check that all rectangles passing through $p, q, r_i, r_{i+1}$, for some $i, j: 1 < i < j \leq m-2$, enclose $R$.
Moreover, all these rectangles have an area equal to $(y_p - y_i) \cdot (x_q - x_{i-1}) = \frac{x_0 y_0}{\frac{x_0}{2} - x_{i-1}} \cdot (\frac{x_0}{2} - x_{i-1}) = x_0 y_0$.
All larger rectangles either contain a blue point or do not contain all red points.
Thus, there are $\Omega(m)$ maximum separating rectangles.
\end{proof}

To compute all maximum separating rectangles, we do the following.

We first compute $S_{min}$ and $S_{max}$ in $O(m + n)$ time. 
Then, we find all $B$-empty rectangles bounded by $S_{max}$ using the approach in~\cite{Hsu} in $O(m^2)$ time. 
For each such rectangle, we check whether it contains $S_{min}$ and, if it does not, we discard it.
Finally, we report the remaining rectangles that have the maximum area.

We have proved the following result.

\begin{theorem}
\label{theorem:2.2}
All axis-aligned maximum area rectangles separating $n$ red points and $m$ blue points can be found in $O(m^2 + n)$ time.
\end{theorem}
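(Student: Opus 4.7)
The plan is to justify the algorithmic outline given just above the theorem by first establishing a structural characterization of the optimal rectangles and then reading off the running time from the individual steps. I would show that every maximum separating rectangle $S^*$ is exactly a maximal axis-aligned rectangle contained in $S_{max}$, containing $S_{min}$, and avoiding the reduced point set $B' := B \setminus S_{min}$. Containment of $S_{min}$ follows because $S^*$ is axis-aligned and encloses all red points, so it must contain the smallest axis-aligned $R$-enclosing rectangle. The minimality of the blue-point count then forces $S^*$ to include no further blue points beyond those already trapped in $S_{min}$, because $S_{min}$ is itself a separating rectangle and already achieves the count $|B \cap S_{min}|$; hence $S^*$ is $B'$-empty. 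Finally, if $S^*$ extended past $S_{max}$ on some side it would swallow the blue point that stopped $S_{max}$ there --- a point of $B'$ --- contradicting $B'$-emptiness.

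With this characterization in hand, the algorithm is almost mechanical. I would compute $S_{min}$ from the coordinate extremes of $R$ in $O(n)$ time, and $S_{max}$ by four min/max scans of $B$ in $O(m)$ time. Forming $B'$ takes $O(m)$ time, by discarding the blue points inside $S_{min}$ (unavoidable, so counted only once) and those outside $S_{max}$ (irrelevant to any bounded candidate). Then I would invoke the algorithm of~\cite{Hsu}, with $S_{max}$ serving as the bounding box, to enumerate all maximal axis-aligned $B'$-empty rectangles in $O(m^2)$ worst-case time. The output is filtered in $O(1)$ per rectangle to keep only those containing $S_{min}$ (an axis-aligned containment test is just four coordinate comparisons), and a final linear scan reports those of maximum area. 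The costs add to $O(m^2+n)$, as claimed.

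The main obstacle, and the step worth dwelling on, is arguing that one really must enumerate \emph{all} maximal $B'$-empty rectangles rather than simply asking~\cite{Hsu} for a single globally largest one. That globally largest rectangle is not guaranteed to contain $S_{min}$, so it may fail to separate $R$ from $B$ at all; some smaller maximal $B'$-empty rectangle that does contain $S_{min}$ could in fact be the true optimum. Hence the $S_{min}$-containment filter has to be applied to the \emph{full} list of maximal candidates before taking the area maximum, and this is what drives the quadratic term. Theorem~\ref{theorem:2.1} complements the picture: since $\Omega(m)$ distinct optima can coexist, any procedure that reports all of them already has non-trivial output size, which is consistent with the worst-case $O(m^2)$ bound inherited from~\cite{Hsu}.
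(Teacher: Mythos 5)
Your proposal follows essentially the same route as the paper: compute $S_{min}$ and $S_{max}$ in $O(m+n)$ time, enumerate all maximal $B$-empty rectangles bounded by $S_{max}$ via the $O(m^2)$ algorithm of Hsu et al., filter out those not containing $S_{min}$, and report the remaining rectangles of maximum area. Your write-up is in fact more explicit than the paper's, which leaves the structural characterization (optimal rectangles contain $S_{min}$, are empty of the reduced blue set, and are bounded by $S_{max}$) implicit; your added justification is correct.
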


\subsection{Finding one optimal solution}
\label{Finding one optimal solution}

Suppose $S_{min}$ and the four staircases have been already computed. We describe an algorithm to find only one optimal solution. 

Observe that all maximal rectangles containing $S_{min}$ are defined by tuples of four points from $B_i, i = NE, NW, SW, SE$ (called \textit{support points}). 
Each of these points supports an edge of the rectangle.
For the $k$-th candidate rectangle, denote the top support by $top_k$, the left support by $left_k$, the right support by $right_k$ and the bottom support by $bottom_k$.

The problem that we solve is essentially the one of finding the maximal $B$-empty rectangle containing a given "origin" point, considered in~\cite{Sharir, Mozes}.
However, in our case, the target rectangle has to contain $S_{min}$, rather than a single point. 
Based on the position of each support of a candidate rectangle, 3 cases may arise~\cite{Sharir, Mozes}.
Note that the solution in~\cite{Sharir} takes $O(m \log m)$ time in cases 1 and 2 and $O(m \alpha(m))$ time in case 3. 
We show how to solve this problem in $O(m)$ time in every case.

\textbf{Case 1}. Three supports are in the same side of $S_{min}$ and the fourth support is on the opposite side of $S_{min}$.

Suppose without loss of generality (wlog) that the top, right and bottom supports lie to the right of $S_{min}$ and the left one lies to the left of $S_{min}$ (as in Figure \ref{fig:candidate_rectangles_1}).
Note that, for each top-right tuple with the top support in $B_{NE}$, there is a unique bottom support to the right of $S_{min}$, which is in $B_{SE}$. 
Thus, the left support is also unique. 
As argued in~\cite{Sharir,Mozes}, there are $O(m)$ top-right tuples.
Similarly, for each top-left tuples with the top support in $B_{NW}$, we get a unique bottom-right tuple.
This gives us a total of $O(m)$ candidate rectangles in case 1. 
See \cite{Sharir, Mozes} for more details.

\begin{figure}[htp]
\centering
\includegraphics[scale=0.75]{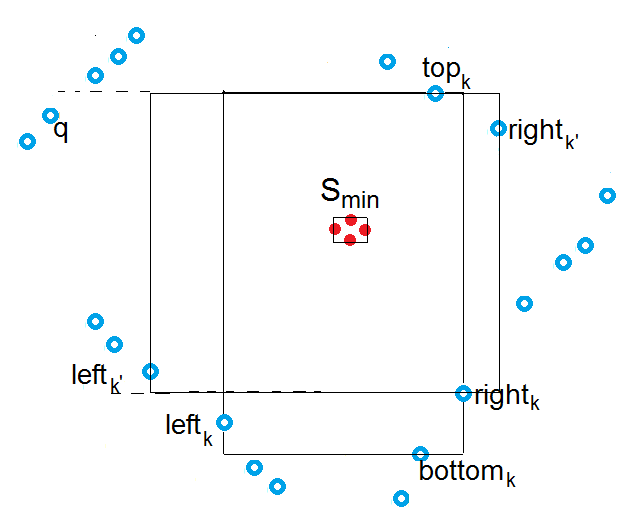}
\caption{The top, right, and bottom supports lie to the right of $S_{min}$ and the left support lies to the left of $S_{min}$.
	For any such top-right pair $(top_k, right_k)$, there is a unique bottom support $bottom_k$ to the right of $S_{min}$ and a unique left support $left_k$ to the left of $S_{min}$.}
\label{fig:candidate_rectangles_1}
\end{figure}

\textbf{Case 2}. Each support is from a different quadrant.

Suppose wlog that $top_k \in B_{NE}$, which implies $right_k \in B_{SE}, bottom_k \in B_{SW}$ and $left_k \in B_{NW}$ (see Figure \ref{fig:candidate_rectangles_2}).
For each top-right tuple satisfying these conditions, there is a unique bottom support from $B_{SW}$ and a unique left support from $B_{NW}$. 
Again, as argued in~\cite{Sharir,Mozes}, there are $O(m)$ top-right tuples.
Similarly, for each top-left tuple with $top_k \in B_{NW}$, the bottom and right supports satisfying the condition are unique.
Thus, there are $O(m)$ candidate rectangles in Case 2. 
More details can be found in \cite{Sharir, Mozes}.

\begin{figure}[htp]
\centering
\includegraphics[scale=0.75]{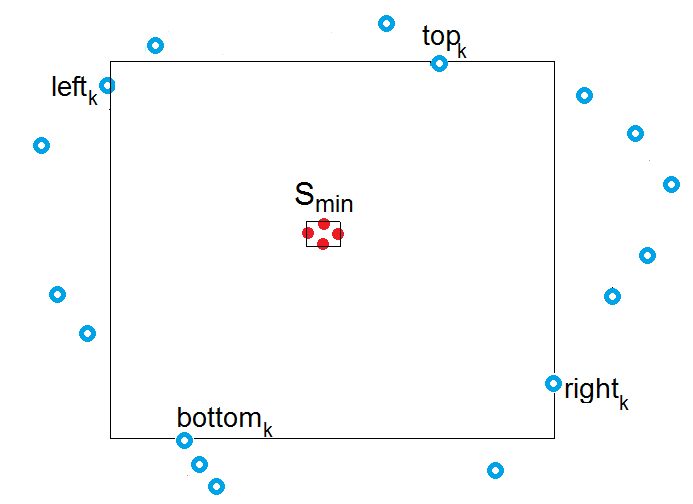}
\caption{Each support is from a different quadrant, with $top_k \in B_{NE}$.
	For any such top-right pair $(top_k, right_k)$, there is a unique bottom support $bottom_k \in B_{SW}$ and a unique left support $left_k \in B_{NW}$.}
\label{fig:candidate_rectangles_2}
\end{figure}

\textbf{Case 3}. Two supports are from a quadrant and the other two are from an opposite quadrant.

Suppose wlog that $top_k, right_k \in B_{NE}$ and $bottom_k, left_k \in B_{SW}$ (refer to Figure \ref{fig:candidate_rectangles_3}).
For each such top-right pair, there are multiple choices of bottom-left pairs formed by adjacent points from $B_{SW}$. 
However, the bottom support has to be above or equal to the last point $p \in B_{SE}$ to the left of $right_k$, if $p$ exists, 
and the left support has to be to the right or equal to the last point $q \in B_{NW}$ below $top_k$, if $q$ exists.
Consider two functions, $f$ and $l$, that assign, to every top-right pair $(top_k, right_k)$, 
the index in $B_{SW}$ of the first (resp., last) bottom support that occurs with $(top_k, right_k)$, denoted by $f(k)$ and $l(k)$, respectively.
Note that $f$ and $l$ are monotonically decreasing functions (as shown in Figure \ref{fig:candidate_rectangles_3}). 

\begin{figure}[htp]
\centering
\includegraphics[scale=0.75]{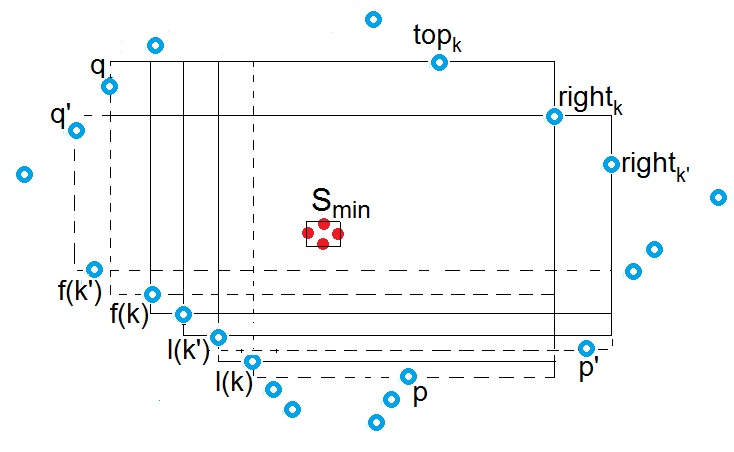}
\caption{For each top-right pair $(top_k, right_k) \in B_{NE}^2$, there are multiple bottom-left pairs $(bottom_k, left_k) \in B_{SW}^2$. 
However, they have to lie above $p$ (if $p$ exists), and to the right of $q$ (if $q$ exists).
For the next top-right pair $(top_{k'} = right_k, right_{k'}) \in B_{NE}^2$, the bottom-left pairs have to lie above $p'$ (if $p'$ exists) and to the right of $q'$ (if $q'$ exists). 
Note that $p'$ is after $p$ in $B_{SE}$ and $q'$ is before $q$ in $B_{NW}$.}
\label{fig:candidate_rectangles_3}
\end{figure}

Cases 1 and 2 give $O(m)$ tuples and we will argue later how to find all these tuples in $O(m)$ time.
Case 3 gives $O(m^2)$ tuples so, from now on, we focus on case 3.
Note that all supporting points are in $B_{NE} \cup B_{SW}$.
Candidate rectangles are defined by two pairs of adjacent points in $B_{NE}$ and $B_{SW}$ respectively.
Denote by $(p_i, q_i)$, the $i$-th pair of adjacent points in $B_{NE}$, and by $(r_j, s_j)$, the $j$-th pair of adjacent points in $B_{SW}$. 
Consider a matrix $M$ such that $M(i, j)$ denotes the area of the rectangle supported on the top-right by $(p_i, q_i)$ and on the bottom-left by $(r_j, s_j)$.
Some entries $(i, j)$ of $M$ correspond to cases where $(p_i, q_i)$ and $(r_j, s_j)$ do not define a $B$-empty rectangle 
and are therefore set to "undefined".
The goal is to compute the maximum of each row $i$, along with the column $j(i)$ where it occurs. To break ties, we always take the rightmost index.

It is easy to see that the defined portion of each row of $M$ is contiguous. 
Since the functions $f$ and $l$ are monotonically decreasing, the defined portion of $M$ has a staircase structure (as shown in Figure \ref{fig:staircase_matrix}).
We say that $M$ is \textit{staircase-defined} by $f$ and $l$.
Moreover, it turns out that $M$ is a partially defined inverse Monge matrix~\cite{Sharir} of size $O(m) \cdot O(m)$, and thus all row-maxima can be found in $O(m \alpha(m))$ time using the algorithm in~\cite{Klawe} (see also~\cite{Mozes}).

\begin{figure}[htp]
\centering
\includegraphics[scale=0.75]{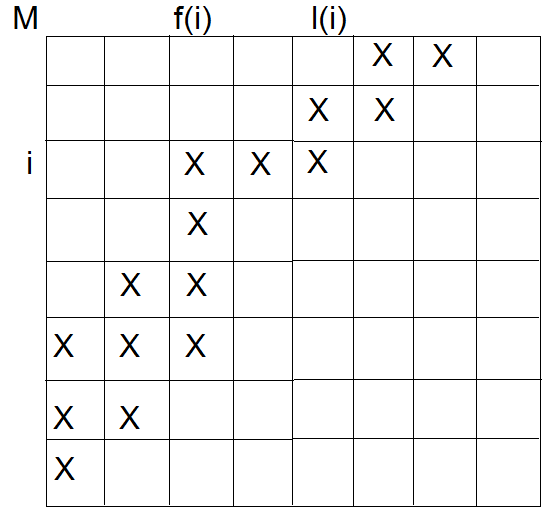}
\caption{Staircase matrix $M$. The defined portion is marked by X's.}
\label{fig:staircase_matrix}
\end{figure}

To eliminate the $\alpha(m)$ factor, we extend $M$ to a totally inverse monotone matrix, so that all row maxima can be found in $O(m)$ time using the SMAWK algorithm~\cite{Smawk}.
Recall that $M$ is totally inverse monotone iff $\forall i, j, k, l$ such that $1 \leq i < k \leq m, 1 \leq j < l \leq m$, we have $M(k, j) \leq M(k, l) => M(i, j) \leq M(i, l)$.
To do that, one could try the approach in~\cite{Sinha} to make $M$ totally inverse monotone, by filling it with 0's on both sides of the defined portion. 
However, it does not work in our case. 
Note that it may happen that there exist $i, j, k, l: 1 \leq i < k \leq m, 1 \leq j < l \leq m$ such that $M(i, j) > 0$ and $M(i, l) = M(k, j) = M(k, l) = 0$, so $M$ is not totally inverse monotone.
Moreover, it follows by a similar argument that $M$ is not totally monotone either.
Therefore, we resort to a different filling scheme, which is similar to the one in~\cite{Klawe}.
Specifically, we fill only the left undefined portion of each row with 0. 
That is, if the defined portion of row $i$ starts at $j_1(i)$, then $M(i, j) = 0, \forall 1 \leq i \leq m, 1 \leq j < j_1(i)$.
We also fill the right undefined portion of each row with negative numbers such that, 
if the defined portion of row $i$ ends at $j_2(i)$, then $M(i, j) = j_2(i) - j, \forall 1 \leq i \leq m, j_2(i) < j \leq m$, 
that is, negative numbers in decreasing order (see Figure \ref{fig:totally_monotone_matrix}).

\begin{figure}[htp]
\centering
\includegraphics[scale=0.75]{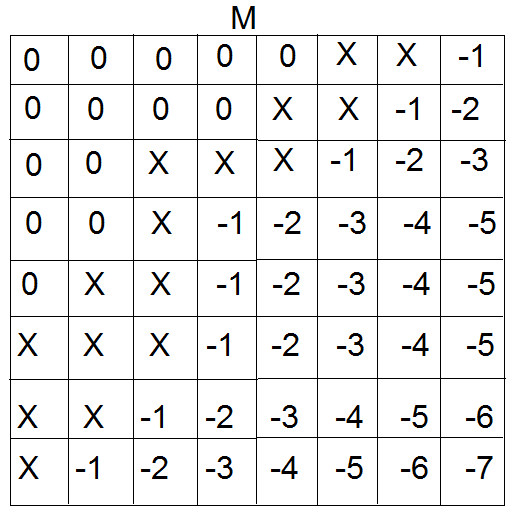}
\caption{$M$ is padded on the left of the defined portion with 0's, and on the right of the defined portion with negative numbers in decreasing order.}
\label{fig:totally_monotone_matrix}
\end{figure}

\begin{lemma}
\label{lemma:2.3}
$M$ is a totally inverse monotone matrix.
\end{lemma}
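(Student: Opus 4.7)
The plan is to prove the lemma by contrapositive, i.e.\ to show that $M(i,j)>M(i,l)$ implies $M(k,j)>M(k,l)$ whenever $i<k$ and $j<l$, and to do this by case analysis on where $j$ and $l$ fall relative to the defined region of rows $i$ and $k$. Label the three parts of row $i$'s column range as $L$ (left of $j_1(i)$, filled by $0$), $D$ (inside $[j_1(i),j_2(i)]$, the actual area values), and $R$ (right of $j_2(i)$, filled by the strictly decreasing negatives $j_2(i)-j$). The staircase monotonicities $j_1(k)\le j_1(i)$ and $j_2(k)\le j_2(i)$ for $i<k$ supply the crucial shift constraint: a column sitting in region $X$ of row $i$ sits in a region no further left in row $k$.

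Because $j<l$ and the three regions are totally ordered $L<D<R$ along any row, only six classifications $(X_i(j),X_i(l))$ are possible: $LL$, $LD$, $LR$, $DD$, $DR$, $RR$. The cases $LL$ and $LD$ are vacuous, since the hypothesis $M(i,j)>M(i,l)$ already fails (both values are $0$, or $0$ against a positive area). In $LR$, $DR$, and $RR$ the hypothesis holds automatically from the sign pattern of the fill. In each of these three, the column $l$ is forced into region $R$ of row $k$ because $l>j_2(i)\ge j_2(k)$, while $j$ can land in $L$, $D$, or $R$ of row $k$; a short sub-case split on the position of $j$ in row $k$ then yields $M(k,j)>M(k,l)$ by direct inspection, using only the non-negativity of $0$ and of defined entries against the negative right-fill, together with the strict decrease of $j_2(k)-j$ in the column index.

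The substantive case, which I expect to be the main obstacle, is $DD$ for row $i$: both values are genuine rectangle areas and the hypothesis is not automatic. Here I would sub-classify by the status of $(j,l)$ in row $k$, which is restricted to $DD$, $DR$, or $RR$ since $j\ge j_1(i)\ge j_1(k)$ rules out $L$. The $DR$ and $RR$ sub-cases are settled as in the previous paragraph by sign pattern and monotonicity of the right-fill. The genuinely interesting sub-case is $DD$ in row $k$ as well: both entries are actual areas, and I would invoke the inverse-Monge property of $M$ on its defined portion (already asserted in the excerpt, citing~\cite{Sharir}) to transfer the inequality $M(i,j)>M(i,l)$ to $M(k,j)>M(k,l)$.

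Finally I would record that the two filling choices are sharp at the comparisons just used: the right-fill being \emph{strictly} decreasing in the column index (rather than a constant negative) is exactly what makes the $RR$-in-row-$k$ comparison close, and padding only the left with $0$ avoids the pathological configuration $M(i,j)>0$, $M(i,l)=M(k,j)=M(k,l)=0$ that the excerpt flags as the obstruction to the scheme of~\cite{Sinha}. Assembling the six top-level cases together with the three sub-cases inside $DD$ then yields the lemma.
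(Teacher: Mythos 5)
Your proposal is correct and is essentially the paper's own argument: both rest on the same four ingredients (the zero left-fill, the strictly decreasing negative right-fill, the monotone decrease of $f$ and $l$ across rows, and the inverse-Monge/total inverse monotonicity of the defined portion), with the paper phrasing it as a proof by contradiction that cases on the sign of $M(i,l)$ while you phrase it as a direct contrapositive with an explicit region-by-region enumeration. Your version is somewhat more systematic and spells out the step the paper compresses into ``which entails $M(k,j) > M(k,l)$ or $M(k,j) > 0, M(k,l) > 0$,'' but it is not a different proof.
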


\begin{proof}
Suppose this is not the case. 
Then there exist $i, j, k, l, 1 \leq i < k \leq m, 1 \leq j < l \leq m$ such that $M(i, j) > M(i, l)$ and $M(k, j) \leq M(k, l)$.
If $M(i, l) < 0$, then by construction $M(k, l) < 0$, so $M(k, j) > M(k, l)$, contradiction. 
Thus, $M(i, l) \geq 0$. 
If $M(i, l) = 0$, then by construction $M(i, j) = 0$, again contradiction. 
Hence, $M(i, j) > M(i, l) > 0$, which entails $M(k, j) > M(k, l)$ or $M(k, j) > 0, M(k, l) > 0$.
The first choice gives a contradiction, so the only remaining possibility is $M(i, j) > M(i, l) > 0$ and $0 < M(k, j) \leq M(k, l)$. 
But this contradicts the total inverse monotonicity of the defined (positive) portion of $M$.
\end{proof}

As a side note, if the functions $f$ and $l$ were monotonically increasing, rather than decreasing, we could make $M$ totally monotone instead. 
That is, $M(i, j) < M(i, l) => M(k, j) < M(k, l)$.
To do that, we fill the "undefined" portion of $M$ with zeros on the right side, and negative numbers in increasing order on the left side of the defined portion of $M$.
By a similar argument as in Lemma \ref{lemma:2.3}, it follows that $M$ is totally monotone.

Note that computing $M$ explicitly would take $\Omega(m^2)$ time. 
To avoid that, we only store the pairs from $B_{SW}$ that may define optimal solutions in a list, as in~\cite{Smawk}, and evaluate $M(i, j)$ only when needed. 
Thus, we only require $O(m)$ time and space. 

\begin{figure}[htp]
\centering
\includegraphics[scale=0.75]{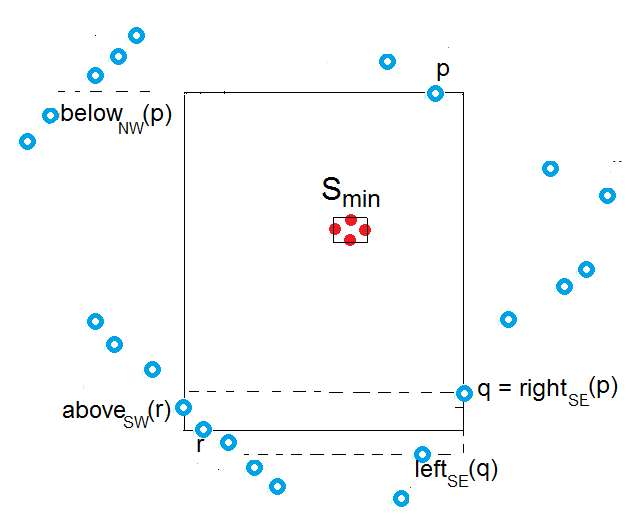}
\caption{The pointers $above$, $below$, $left$ and $right$.}
\label{fig:candidate_rectangle}
\end{figure}

In order to find the optimal solution, after having computed the staircases, we do the following.
Suppose we have fixed a top-right pair $(p_k, q_k)$, with $p_k \in B_{NE} (B_{NW}), q_k \in B_{NE} \cup B_{SE} (B_{NE} \cup B_{SW})$.
The leftmost possible support is the highest point in $B_{NW}$ below $p_k$, denoted $below_{NW}(p_k)$, and the lowest possible support is the rightmost point in $B_{SE}$ to the left of $q_k$, denoted $left_{SE}(q_k)$.
Both such extremal supports can be found in $O(1)$ time if we store, with each point $p$ in some staircase, and each quadrant $q$, the pointers $below_q(p)$ and $left_q(p)$  (see Figure \ref{fig:candidate_rectangle}). 
In addition, we consider, for each $r$, the pointers $above_q(r)$ to the lowest point in quadrant $q$ above $r$, and $right_q(p)$ to the leftmost point in quadrant $q$ to the right of $p$.
These pointers can be precomputed in $O(m)$ time for all $p \in B$ through a scan in X order.
Note that these pointers are defined in a similar manner as in the algorithm in \cite{Aggarwal} for finding the largest empty corner rectangle.
Also, the staircases are stored as doubly-linked lists with pointers $prev(p), next(p)$ to the point before (resp., after) $p$ in the respective staircase, plus the pointers mentioned above whenever needed.	
We then consider all bottom-left pairs occurring with $(p_k, q_k)$ and assume wlog that $p_k \in B_{NE}$.
If $p_k, q_k$ and one of $left_{SE}(q_k), bottom_{NW}(p_k)$ (say $left(q_k)$) are on the same side of $S_{min}$,  then we are in case 1 with a rectangle defined by $p_k, q_k, left(q_k), above(left(q_k))$.
If $p_k, q_k, left_{SW}(q_k)$, and $below_{NW}(p_k)$ are on different quadrants, then we are in case 2 with a rectangle defined by $p_k, q_k, left_{SW}(q_k), below_{NW}(p_k)$.
Thus, cases 1 and 2 require $O(m)$ time and space in total.
Otherwise (i.e. we are in case 3), we store $f(k) = right_{SW}(below_{NW}(p_k)), l(k) = above_{SW}(left_{SE}(q_k))$, respectively, in two arrays, $F$ and $L$.
After all top-right and top-left pairs are treated, we run the SMAWK algorithm as described earlier, in order to compute all row-maxima of $M$ in $O(m)$ time, 
where $M$ is staircase matrix defined by $F$ and $L$, and $M(i, j)$ is the area of rectangle defined by the $i$-th pair from one quadrant and the $j$-th pair from the opposite quadrant in case 3.
We then report the (last index) rectangle corresponding to the maximum between all row-maxima of $M$ and all maximum area rectangles obtained in cases 1 and 2, along with its area.
Thus, we have proved the following result.

\begin{theorem}
\label{theorem:2.4}
The axis-aligned version of the maximum-area separating rectangle problem can be solved in $O(m \log m + n)$ time and $O(m + n)$ space.
The running time reduces to $O(m + n)$ if the blue points are presorted by their X coordinates.
\end{theorem}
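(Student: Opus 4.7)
The plan is to verify that the pipeline described in this subsection fits within the advertised $O(m \log m + n)$ time and $O(m+n)$ space budgets. First I would compute $S_{min}$ from the coordinate extrema of $R$ in $O(n)$ time and extend each side to obtain $S_{max}$ in a single pass over $B$, in the process discarding the blue points that lie inside $S_{min}$ or outside $S_{max}$; all of this fits in $O(m+n)$ time and space, and unbounded instances are detected here as well. Sorting the remaining blue points by $X$ takes $O(m \log m)$ (or is skipped in the presorted case), after which the four staircases $ST_{NE}, ST_{NW}, ST_{SW}, ST_{SE}$ are extracted by standard monotone scans in $O(m)$ time.

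Next, I would perform a constant number of $X$-ordered sweeps to precompute, for every staircase point $p$ and every quadrant $q$, the pointers $above_q(p), below_q(p), left_q(p), right_q(p)$, together with the $prev(\cdot), next(\cdot)$ links along each staircase, all in $O(m)$ time and $O(m)$ space. Using these pointers, every top-right pair $(p_k, q_k)$ can be classified as Case 1, 2, or 3 in $O(1)$ and its associated supports located in $O(1)$; since there are $O(m)$ such pairs in total, the candidate rectangles for Cases 1 and 2 are enumerated and evaluated in $O(m)$ time overall. For each Case-3 pair $k$ I would record $F(k) = right_{SW}(below_{NW}(p_k))$ and $L(k) = above_{SW}(left_{SE}(q_k))$ in two arrays of size $O(m)$, which together define the staircase matrix $M$ implicitly, without ever materializing its $\Theta(m^2)$ entries.

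The main obstacle is bounding the cost of extracting the row-maxima of $M$ in Case 3. Here I would invoke Lemma~\ref{lemma:2.3} to certify that the padded $M$ is totally inverse monotone and then run the SMAWK algorithm~\cite{Smawk} on $M$: any entry $M(i,j)$---whether it lies in the defined positive region, the zero-padded left region, or the negatively-padded right region---is computable in $O(1)$ time from $F$, $L$, and the coordinates of the corresponding staircase points, so SMAWK returns all $m$ row-maxima in $O(m)$ time and $O(m)$ space. Comparing these row-maxima against the $O(m)$ candidate rectangles produced by Cases 1 and 2 and reporting the largest (breaking ties by rightmost index, as prescribed) yields an optimal rectangle. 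Summing over all phases gives total time $O(m \log m + n)$ and space $O(m + n)$, with the sorting step being the only term that depends on $\log m$; whenever the blue points are presorted by $X$ this step disappears, and the running time drops to $O(m + n)$.
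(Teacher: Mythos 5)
Your proposal is correct and follows essentially the same route as the paper: compute $S_{min}$ and $S_{max}$, build the four staircases after an $O(m\log m)$ sort, precompute the quadrant pointers in $O(m)$, handle Cases 1 and 2 directly via those pointers, and reduce Case 3 to row-maxima of the implicitly represented matrix $M$ padded per Lemma~\ref{lemma:2.3} so that SMAWK applies in $O(m)$ time. The accounting of time and space, including the observation that sorting is the only superlinear step and disappears when the blue points are presorted by $X$, matches the paper's argument.
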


\subsection{Lower bound}
\label{Lower bound}
In this section, we prove that $\Omega(m \log m + n)$ steps are sometimes needed 
in order to compute a maximum axis-aligned separating rectangle, provided that the blue points are not pre-sorted.

To do that, we reduce our problem from the 1D-Furthest-Adjacent-Pair problem, 
which is known to have a lower bound of $\Omega(m \log m)$ for a set of $m$ numbers.

In the 1D-Furthest-Adjacent-Pair, we are given a set $A$ of $m$ real numbers, 
and the goal is to find the two numbers $a, b \in A, a < b$ for which the quantity $b - a$ is the maximum among all adjacent pairs in the sorted order of $A$ (denoted by $A'$).
Wlog assume that all these numbers are in the interval $[0, 1]$.

The reduction to our problem is as follows.
For each $a_i \in A$, we consider two points, $p_i(a_i, \frac{1}{1 + a_i})$ and $q_i(-a_i, -\frac{1}{1 + a_i})$.
The set of blue points $B$ is the set of all such $p_i$'s and $q_i$'s.
The red point set $R$ consists of the origin $O$, together with four special points, $s_E(a_2, 0), s_N(0, \frac{1}{1 + a_{m-1}}), s_W(-a_2, 0)$, and $ s_S(0, -\frac{1}{1 + a_{m-1}})$.
See Figure \ref{fig:lower_bound_aa_1} for details.

\begin{figure}[htp]
\centering
\includegraphics[scale=0.75]{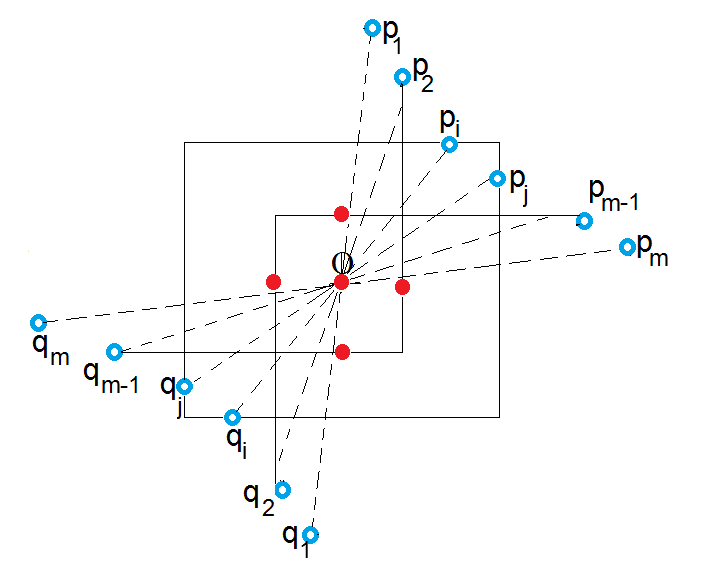}
\caption{$R$ consists of the origin $O(0, 0)$ and four points $s_E, s_N, s_W, s_S$.
		For each $a_i \in A$, there are two blue points $p_i, q_i$ of opposite coordinates. 
		Each tuple $(p_i, p_j, q_i, q_j), 1 \leq i, j \leq m$ defines a candidate rectangle.}
\label{fig:lower_bound_aa_1}
\end{figure}

Now we prove that the reduction works.

\begin{lemma}
\label{lemma:2.5}
Two adjacent numbers $a_i, a_j \in A$ form the furthest adjacent pair of numbers if and only if 
the rectangle $S$ bounded by $p_i, p_j, q_i$ and $q_j$ is one of the separating rectangles of maximum area.
\end{lemma}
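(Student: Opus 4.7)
The plan is to establish the bijection between adjacent pairs of $A'$ and candidate maximum separating rectangles, then reduce both directions of the iff to a one‑dimensional comparison.

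Setup. First I would observe that by construction $\{p_i\}$ lies in NE and $\{q_i\}$ in SW, while NW and SE are empty; because $a \mapsto 1/(1+a)$ is strictly decreasing on $[0,1]$, sorting $\{p_i\}$ by $x$ gives strictly decreasing $y$, so $\{p_i\}$ is already a staircase in $B_{NE}$, and symmetrically $\{q_i\}$ is a staircase in $B_{SW}$. The red points pin $S_{min}$ at $[-a_{[2]}, a_{[2]}]\times[-1/(1+a_{[m-1]}), 1/(1+a_{[m-1]})]$ (in sorted indexing), so every feasible rectangle contains $S_{min}$ and is supported by four blue points drawn from $\{p_i\}\cup\{q_i\}$.

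Candidate structure. Since NW and SE are empty, only case 3 of Section 2.2 applies: the top–right support pair is in $B_{NE}$ and the bottom–left support pair is in $B_{SW}$. For the rectangle to be $B$-empty and to contain $S_{min}$, each such pair must be adjacent in its staircase, which by construction corresponds to adjacent indices in the sorted order $A'$. Using the involution $p_i \leftrightarrow -q_i$, a short swapping argument then shows the area maximum over the two independent adjacency choices is attained when the same sorted pair is used on both sides, so it suffices to analyze the symmetric rectangle $S$ bounded by $p_i, p_j, q_i, q_j$.

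Reduction to a 1D comparison. In the symmetric case with adjacent $a_i < a_j$, the rectangle is $[-a_j, a_j] \times [-1/(1+a_i), 1/(1+a_i)]$, and its area is $4 a_j/(1+a_i)$. Both directions of the iff then follow from the claim that, among adjacent pairs of $A'$, the map $(i,j)\mapsto a_j/(1+a_i)$ attains its maximum at exactly the same pair(s) as $(i,j)\mapsto a_j-a_i$. Given this, an $O(T(m,n))$ algorithm for the maximum separating rectangle reads off the furthest adjacent pair in the same time, yielding the desired $\Omega(m\log m)$ lower bound via Lemma \ref{lemma:2.5}.

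Main obstacle. The crux is the last equivalence. The ordering of $a_j/(1+a_i)$ is \emph{not} globally monotone in the gap $a_j-a_i$, so the proof has to use the adjacency constraint in $A'$ in an essential way: after cross‑multiplying $a_j(1+a_{i'}) \ \text{vs.}\ a_{j'}(1+a_i)$ for two adjacent pairs, one has to exploit the presence of the intermediate sorted elements between the pairs to control the sign. I expect this algebraic manipulation, together with careful handling of ties (the "one of" clause in the lemma), to be the main technical step; once it is settled, the iff is immediate from the area formula.
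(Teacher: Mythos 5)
Your geometric setup (the $p_i$'s and $q_i$'s form the NE and SW staircases, the red points force every candidate to contain $S_{min}$, and the symmetric candidate for an adjacent pair $(a_i,a_j)$ is $[-a_j,a_j]\times[-\tfrac{1}{1+a_i},\tfrac{1}{1+a_i}]$ with area $4a_j/(1+a_i)$) matches the paper. But you then reduce the entire equivalence to the claim that, over adjacent pairs of $A'$, the map $(i,j)\mapsto a_j/(1+a_i)$ is maximized at exactly the same pair(s) as $(i,j)\mapsto a_j-a_i$, and you only announce this claim as ``the main technical step'' to be settled later by cross-multiplication and use of the intermediate sorted elements. That step is the whole content of the lemma, and it cannot be completed: the claim is false. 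Take $A=\{0,\tfrac12,\tfrac45,1\}$. The adjacent gaps are $\tfrac12,\tfrac{3}{10},\tfrac15$, so the furthest adjacent pair is $(0,\tfrac12)$, with ratio $\tfrac{a_j}{1+a_i}=\tfrac12$; but the adjacent pair $(\tfrac45,1)$ has ratio $\tfrac59>\tfrac12$ (and $(\tfrac12,\tfrac45)$ has ratio $\tfrac{8}{15}>\tfrac12$). So the maximum-area symmetric candidate comes from the \emph{smallest} gap, and both directions of the iff fail on this instance. You have in fact put your finger on exactly the spot where the paper's own argument is loose --- it declares $f(d)=4\tfrac{a_i+d}{a_i+1}$ ``monotonically increasing,'' which only compares pairs sharing the same left endpoint $a_i$ --- but the resolution you hope for (adjacency plus algebra rescuing the comparison) does not exist for this construction; one would have to change the point coordinates so that the area depends on the gap $a_j-a_i$ alone.

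A second, independent gap: the ``short swapping argument'' that the optimum uses the same index pair on both staircases is also unproved, and it too fails. For the same $A$, the rectangle supported on the NE side by the pair $(\tfrac45,1)$ and on the SW side by the pair $(0,\tfrac12)$ is $B$-empty, contains all red points, and has area $(1+\tfrac12)(\tfrac59+1)=\tfrac73$, strictly larger than every symmetric candidate (the best of which is $\tfrac{20}{9}$). So restricting attention to rectangles of the form $(p_i,p_j,q_i,q_j)$ already loses the optimum, and the bijection you set out to establish between adjacent pairs of $A'$ and candidate maximum separating rectangles does not hold as stated.
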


\begin{proof}
Let $a_i, a_j$ be adjacent in $A'$ and $a_j - a_i \geq |a_k - a_l|, \forall a_k, a_l \in A$ that are adjacent in $A'$.
Note that $y(p_i)$ decreases while $x(p_i)$ increases (same for $q_i, p_j, q_j$) and $y(p_i) \geq s_N, x(p_j) \geq s_E, y(q_i) \leq s_S, x(q_j) \leq s_W, \forall 1 \leq i < j \leq m$, so $S$ encloses $R$.
We need to show that the rectangle $S$ has the following properties:

(1) does not contain any blue point, and

(2) has the maximum area among all such rectangles.

If $S$ would contain a point $p_k$ (similarly, $q_k$), then we would have $x(p_k) = a_k$ with $a_i < a_k < a_j$, contradiction. 
So property (1) is satisfied.
We have the points $p_i(a_i, \frac{1}{1 + a_i}), q_i(-a_i, -\frac{1}{1 + a_i}), p_j(a_j, \frac{1}{1 + a_j})$ and $q_j(-a_j, -\frac{1}{1 + a_j})$.
This means that $area(S) = 4 \frac{a_j}{1 + a_i}$. 
Let $a_j - a_i = d > 0$ and $f(d) = area(S)$. 
We have $f(d) = 4 \frac{a_i + d}{a_i + 1}$, which is monotonically increasing.
Therefore, $S$ has the highest area among all rectangles with property (1).

Now let $S$ be a largest separating rectangle bounded by $(p_i, p_j, q_j, q_i)$, for some $a_i, a_j$.
Since $S$ does not contain any blue points, there cannot exist any $a_k \in A: a_i < a_k < a_j$, so $a_i, a_j$ are adjacent in $A'$.
Moreover, since $area(S)$ is monotonically increasing in $a_j - a_i$, it follows that $a_j - a_i \geq |a_k - a_l|, \forall a_k, a_l \in A$ that are adjacent in $A'$.
\end{proof}

This gives us the following result.

\begin{theorem}
\label{theorem:2.6}
$\Omega(m \log m + n)$ steps are needed in order to compute the maximum axis-aligned rectangle separating $R$ and $B$.
\end{theorem}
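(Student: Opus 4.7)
The plan is to combine the reduction already set up in Lemma \ref{lemma:2.5} with the known lower bound on 1D-Furthest-Adjacent-Pair, plus a straightforward information-theoretic argument for the $n$ term. First I would verify that the reduction itself is cheap: given the input $A = \{a_1,\dots,a_m\}$ (in any order), we build $B = \{p_i, q_i\}_{i=1}^m$ and the five-point set $R = \{O, s_E, s_N, s_W, s_S\}$ in $O(m)$ arithmetic operations, without sorting $A$. The only non-trivial coordinates among the red points are $s_E$'s and $s_W$'s $x$-coordinates $\pm a_2$ and $s_N$'s and $s_S$'s $y$-coordinates $\pm 1/(1+a_{m-1})$; but note that the reduction in the picture uses the smallest and largest elements of $A$ (the authors happen to label them $a_2$ and $a_{m-1}$), and these extrema are found in $O(m)$ time. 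So the transformation costs $O(m)$.

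Next, suppose an algorithm $\mathcal{A}$ solves the maximum axis-aligned separating rectangle problem in $T(m,n)$ steps. By Lemma \ref{lemma:2.5}, whatever rectangle $\mathcal{A}$ outputs on input $(R,B)$ is supported by a quadruple $(p_i, p_j, q_i, q_j)$ with $a_i, a_j$ adjacent in $A'$ and realizing the maximum gap. From the returned rectangle's support points we can read off the indices $i,j$ (and hence $a_i, a_j$) in $O(1)$ time. Composing the reduction, the call to $\mathcal{A}$, and the readout, we solve 1D-Furthest-Adjacent-Pair on $m$ real numbers in $O(m) + T(m, O(1)) + O(1)$ steps. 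Since 1D-Furthest-Adjacent-Pair has the stated lower bound of $\Omega(m \log m)$ (standard in the algebraic decision tree / linear decision tree model; it follows from MaxGap or from element-distinctness-style arguments), we conclude $T(m, O(1)) = \Omega(m \log m)$, and hence $T(m,n) = \Omega(m \log m)$.

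For the $\Omega(n)$ term, I would argue that any algorithm must at least inspect every red point: if some red point $r \in R$ is never read, an adversary can relocate $r$ anywhere outside the reported rectangle, falsifying the output. Formally, in the decision model, consider two instances differing only in one red point placed inside versus outside the candidate optimum; the algorithm must distinguish them, so $n$ comparisons (or coordinate reads) are forced. Combining the two bounds yields $T(m,n) = \Omega(m \log m + n)$.

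The main obstacle, such as it is, is making sure the reduction is genuinely valid in the chosen model of computation: the lower bound for 1D-Furthest-Adjacent-Pair is customarily stated for the algebraic computation tree / real RAM with comparisons, and the coordinates we construct ($a_i$ and $1/(1+a_i)$) are algebraic functions of the inputs, which is permissible in that model. I would include one sentence making this explicit so that the $\Omega(m\log m)$ transfers cleanly, and note that presorting the blue points (in which case Theorem \ref{theorem:2.4} drops to $O(m+n)$) falls outside this lower bound because it amortizes the sorting cost into the preprocessing, exactly as the statement of the theorem requires.
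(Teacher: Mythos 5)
Your proposal is correct and follows essentially the same route as the paper: the $\Omega(m\log m)$ term via the reduction from 1D-Furthest-Adjacent-Pair established in Lemma \ref{lemma:2.5}, and the $\Omega(n)$ term from the necessity of examining every red point. Your write-up is in fact more careful than the paper's two-line proof, since you explicitly verify that the reduction is computable in $O(m)$ time without sorting, that the answer can be read off the output, and that the computation model supports the transfer of the lower bound.
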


\begin{proof}
First, $\Omega(n)$ steps are needed in order to compute $S_{min}$, 
as one cannot compute the optimal solution without knowing $S_{min}$.
The $\Omega(m \log m)$ term follows from the reduction from 1D-Furthest-Pair. 
\end{proof}

\section{Conclusion and Remarks}
\label{Conclusion and Remarks}

We addressed the problem of finding the maximum area axis-aligned separating rectangle that encloses all red points and the minimum number of blue points proved a lower bound for the problem, and provided optimal algorithms. 

We reopen the problem of finding a maximum area empty rectangle among points in the plane, Specifically, either prove an $O(n \log^2n)$ lower bound (which seems unlikely), or improve over the thirty years old $O(n \log^2n)$ time algorithm of Aggarwal and Suri~\cite{Aggarwal}.  

We also leave open whether it is possible to adapt the maximum empty rectangle containing a query point data structures in~\cite{Sharir, Mozes}
to find the maximum area rectangle containing a set of red points and fewest number of blue points when the red points are given at query time.
Notice that this version could have important applications, including in fabrication of integrated circuits, where red points could represent defects in the fabrication boards. 

\section*{Acknowledgement}
\label{Acknowledgement}
The authors would like to thank Dr. Anastasia Kurdia for her preliminary work on the maximum separating rectangle problem.

%---------------------------- Bibliography -------------------------------

% Please add the contents of the .bbl file that you generate,  or add bibitem entries manually if you like.
% The entries should be in alphabetical order
\small
\bibliographystyle{abbrv}

\end{document}